\newtheoremstyle{stuff}{\topsep}{\topsep}%
     {\itshape}
     {}
     {\bfseries}
     {}
     {.5em}
     {\thmnote{#3}}
\theoremstyle{plain}
\newtheorem{thm}{Theorem}
\newtheorem{lem}[thm]{Lemma}
\theoremstyle{remark}
\newtheorem*{rem}{Remark}
\theoremstyle{definition}
\newtheorem{defn}[thm]{Definition}
\newtheorem{example}[thm]{Example}
\theoremstyle{stuff}
\numberwithin{equation}{section}
\numberwithin{thm}{section}
\newcommand{\bea}{\begin{eqnarray}}
\newcommand{\eea}{\end{eqnarray}}
\newcommand{\be}{\begin{equation}}
\newcommand{\ee}{\end{equation}}
\begin{document}

\pagestyle{empty}

\begin{flushright}
\begin{tabular}{l}
 \\
\end{tabular}
\end{flushright}

{\noindent \Large \bf\fontfamily{pag}\selectfont A generalized topological recursion for arbitrary ramification}\\

\vspace*{0.15cm}
\noindent \rule{\linewidth}{0.5mm}

\vspace*{1cm}

{\noindent \fontfamily{pag}\selectfont  Vincent Bouchard, Joel Hutchinson, Prachi Loliencar,\\[0.5em] Michael Meiers and Matthew Rupert}\\[2em]
{\small {\it  \indent Department of Mathematical and Statistical Sciences\\
\indent University of Alberta\\
\indent 632 CAB, Edmonton, Alberta T6G 2G1\\
\indent Canada}\\
\indent \url{vincent@math.ualberta.ca}\\
\indent\url{jhutchin@ualberta.ca}\\
\indent \url{lolienca@ualberta.ca}\\
\indent\url{mcmeiers@gmail.com}\\
\indent \url{mrupert@ualberta.ca}}\\[0.3em]

\vspace*{1cm}

\hspace*{1cm}
\parbox{11.5cm}{{\sc Abstract:} The Eynard-Orantin topological recursion relies on the geometry of a Riemann surface $S$ and two meromorphic functions $x$ and $y$ on $S$. To formulate the recursion, one must assume that $x$ has only simple ramification points. In this paper we propose a generalized topological recursion that is valid for $x$ with arbitrary ramification. We justify our proposal by studying degenerations of Riemann surfaces. We check in various examples that our generalized recursion is compatible with invariance of the free energies under the transformation $(x,y) \mapsto (y,x)$, where either $x$ or $y$ (or both) have higher order ramification, and that it satisfies some of the most important properties of the original recursion. Along the way, we show that invariance under $(x,y) \mapsto (y,x)$ is in fact more subtle than expected; we show that there exists a number of counter examples, already in the case of the original Eynard-Orantin recursion, that deserve further study.}

\vspace*{1cm}


\vspace*{0.5cm}


\pagebreak

\pagestyle{fancy}
\pagenumbering{arabic}

\tableofcontents 


%

\section{Introduction}

The Eynard-Orantin topological recursion was proposed by Eynard and Orantin in \cite{Eynard:2007,Eynard:2008}, based on the original formulation by Chekhov, Eynard and Orantin in the context of two-matrix models \cite{CEO:2006}. In recent years, it has become clear that the recursion is a unifying theme in various counting problems in enumerative geometry. For instance, it appears in Hurwitz theory \cite{Bouchard:2009ii,BEMS,EMS:2009,MZ}, Gromov-Witten theory \cite{Marino:2008,Bouchard:2009, Bouchard:2010, Chen:2009,Zhou:2009,Zhou:2009ii,Zhu:2011,BCMS,Eynard:2011,NS:2011,Eynard:2011ii,EO:2012}, Seiberg-Witten theory \cite{HK}, knot theory \cite{BEM,BE,GS,DFM}, and many more contexts \cite{DMSS,NS:2010}.

The Eynard-Orantin recursion is based on the geometry of a spectral curve $(S,x,y,W^0_2)$, where $S$ is a compact Riemann surface, $x$ and $y$ are meromorphic functions on $S$, and $W^0_2$ is a canonical bilinear differential on $S$ (see section \ref{s:EO} for definitions). Moreover, one needs to make two assumptions on the functions $x$ and $y$:
\begin{enumerate}
\item  the ramification points of $x$ and $y$ must not coincide;
\item  the ramification points of $x$ must all be simple.
\end{enumerate}
While the first assumption is rather important, the second is merely technical, in the sense that the recursion as formulated by Eynard and Orantin only makes sense for simple ramification points, but fundamentally there is no reason to consider only such cases. In this paper we generalize the recursion to get rid of the second assumption on the function $x$, \emph{i.e.} we allow arbitrary ramification.

The case where $x$ has double ramification points was already studied from the point of view of Cauchy matrix models by Prats Ferrer in \cite{Prats:2010}. Our generalization is based on  \cite{Prats:2010}, even though we work outside of the matrix model realm.

The structure of the original Eynard-Orantin recursion can be understood in terms of degenerations of Riemann surfaces. In the same vein, we justify the structure of our generalized recursion by studying more complicated degenerations of Riemann surfaces appropriate for higher order ramification points.

One of the most fascinating properties of the free energies $F_g$ constructed from the Eynard-Orantin recursion is symplectic invariance. In \cite{Eynard:2007ii}, Eynard and Orantin argue that the $F_g$ constructed from a spectral curve $(S,x,y,W^0_2)$ should be equal to the $\tilde{F}_g$ constructed from a distinct spectral curve $(S,\tilde{x},\tilde{y},W^0_2)$, with $(\tilde{x},\tilde{y}) = (y,x)$. This is a highly non-trivial statement, since the ramification points of $x$ and $\tilde{x} = y$ have nothing in common \emph{a priori}. 

Of course, to make sense of this statement using the original Eynard-Orantin recursion, one must require that not only $x$ has only (and at least one) simple ramification points, but that $\tilde{x} = y$ also has only (and at least one) simple ramification points. Using our generalized recursion, we can get rid of this assumption. We check in various examples that in cases where $x$ has only simple ramification points but $y$ has arbitrary ramification, the $F_g$ constructed from the Eynard-Orantin recursion for $(x,y)$ are precisely equal to the $\tilde{F}_g$ constructed from our generalized recursion for $(\tilde{x},\tilde{y}) = (y,x)$. Moreover, we also check that the free energies constructed from the generalized recursion when both $x$ and $y$ have higher order ramification points are invariant under $(x,y) \mapsto (y,x)$. These results support the claim that our generalization is the appropriate one for meromorphic functions $x$ and $y$ with arbitrary ramification.

We also check in examples that the correlation functions and free energies constructed from the generalized recursion satisfy some of the most important properties fulfilled by the corresponding objects in the original Eynard-Orantin recursion, such as symmetry and the dilaton equation. This provides further evidence that our generalization is appropriate.

Along the way, we find that invariance of the free energies under $(x,y) \mapsto (y,x)$ is in fact more subtle than expected. We find a number of counter examples where $F_g \neq \tilde{F}_g$. Some of these examples only involve the original Eynard-Orantin recursion, hence the argument of \cite{Eynard:2007ii} must fail somehow for these spectral curves. We speculate on what is going wrong in these examples, but we cannot identify the precise step in \cite{Eynard:2007ii} that fails for these examples. This shows that symplectic invariance should be studied more carefully, and we hope to report on that in the near future.

\subsection*{Outline}

In section \ref{s:EO} we review the original Eynard-Orantin topological recursion, and study how it relates to degenerations of Riemann surfaces. In section \ref{s:arbitrary} we define our generalized recursion for arbitrary ramification. We argue that it is a natural generalization from the point of view of gluing Riemann surfaces. In section \ref{s:si} we discuss symplectic invariance of the free energies, and check in various examples that our generalized recursion is consistent with symplectic invariance. We also check in \ref{s:other} that the correlation functions are symmetric, and that they satisfy the dilaton equation. In \ref{s:SI} we study more carefully symplectic invariance, and present explicit counter examples that need to be studied further. We conclude with future avenues of research in section \ref{s:conclusion}.

\subsection*{Acknowledgments}

V.B. would like to thank Andrei Catuneanu, Bertrand Eynard, Amir Kashani-Poor, Christopher Marks, Motohico Mulase and Piotr Su\l kowski for useful discussions. The research of V.B. is supported by an NSERC Discovery grant, while J.H, P.L., M.M and M.R. are supported by NSERC Undergraduate Student Research Awards.

\section{The Eynard-Orantin topological recursion}
\label{s:EO}

In this section we review the original topological recursion formulated by Eynard and Orantin in \cite{Eynard:2007, Eynard:2008}. We will use the notation put forward by Prats Ferrer in \cite{Prats:2010}.

\subsection{Geometric context}

We start with a compact Riemann surface $S$. The type of objects that we will be interested in are \emph{meromorphic differentials} on the Cartesian product
\begin{equation}
S^n = \underbrace{S \times \ldots \times S}_{n \text{ times}}.
\end{equation}
In local coordinates $z_i := z(p_i)$, $p_i \in S$, $i=1,\ldots,n$ a degree $n$ differential can be written as
\begin{equation}
W_n(p_1, \ldots, p_n) = w_n(z_1, \ldots, z_n) \mathrm{d} z_1 \cdots \mathrm{d} z_n,
\end{equation}
where $w(z_1, \ldots, z_n)$ is meromorphic in each variable.

We now define one such meromorphic differential, which is a classical object in the theory of Riemann surfaces, and is fundamental for the topological recursion:
\begin{defn}
The \emph{canonical bilinear differential of the second kind} $W^0_2(p_1,p_2)$ is the unique bilinear differential on $S^2$ defined by the conditions:\footnote{Note that in the physics literature the canonical bilinear differential has also been called \emph{Bergman kernel}.}
\begin{itemize}
\item It is symmetric, $W^0_2(p_1, p_2) = W^0_2(p_2,p_1)$;
\item It has its only pole, which is double, along the diagonal $p_1 = p_2$, with no residue; its expansion in this neighborhood has the form
\begin{equation}
W^0_2(p_1, p_2) = \left(\frac{1}{(z_1-z_2)^2} + \text{regular} \right) \mathrm{d} z_1 \mathrm{d} z_2.
\end{equation}
\item It is normalized about a basis of $(A^I, B_J)$ cycles on $S$ such that
\begin{equation}
\oint_{A^I} B( \cdot, p) = 0.
\end{equation}
\end{itemize}
\end{defn}

\begin{example}
For $S = \mathbb{P}^1$, the canonical bilinear differential is the Cauchy differentiation kernel:
\begin{equation}
W_2^0(p_1, p_2) = \frac{\mathrm{d} z_1 \mathrm{d} z_2}{(z_1 - z_2)^2}.
\end{equation}
It is such that for any meromorphic function $f$ on $S$, we have that
\begin{equation}
\mathrm{d} f = f'(z) \mathrm{d} z =  \underset{w = z}{\text{Res}} \left( \frac{f(w) \mathrm{d} z \mathrm{d} w}{(w-z)^2} \right) =   \underset{w = z}{\text{Res}} \left( f(w) W_2^0(w,z) \right).
\end{equation}
\end{example}

\begin{rem}
Note that the property that
\begin{equation}
\mathrm{d} f = \underset{w = z}{\text{Res}} \left( f(w) W_2^0(w,z) \right)
\end{equation}
for any meromorphic function on $S$ is true not only when $S$ has genus $0$. In fact, it could be taken as a definition of the canonical bilinear differential.
\end{rem}

We can now define the key player in the game of the topological recursion:

\begin{defn}\label{d:spectralEO}
A \emph{spectral curve} is a quadruple $(S, x, y, W_2^0)$ where:
\begin{itemize}
\item $S$ is a compact Riemann surface;
\item $x$ and $y$ are meromorphic functions on $S$;\footnote{Note that the recursion can also be formulated for $x$ and $y$ not meromorphic on $S$, as long as their differentials $\mathrm{d} x$ and $\mathrm{d} y$ are meromorphic one-forms on $S$.}
\item $W_2^0(p_1,p_2)$ is a canonical bilinear differential of the second kind.
\end{itemize}
We also assume that $x$ has only simple ramification points, and that the ramification points of $x$ and $y$ do not coincide.
\end{defn}

\begin{rem}
The requirement that $x$ has only \emph{simple} ramification points is necessary to make sense of the original Eynard-Orantin recursion. This is precisely the requirement that we will get rid of by proposing a generalized topological recursion.
\end{rem}

To define the topological recursion in the next subsection, we need to introduce some notation. We assume that the meromorphic function $x$ on $S$ has only simple ramification points. Let $\triangle$ be the set of simple ramification points of $x$. At each $a \in \triangle$, two of the branches of $x$ collide, since the ramification points are simple. This means that there exists a unique involution
\begin{equation}
\theta: U \to U,
\end{equation}
defined locally in a neighborhood $U$ of $a$, that exchanges the two branches; it is given by the deck transformation:
\begin{equation}
x \circ \theta = x,
\end{equation}
with the requirement that $\theta(a) = a$, \emph{i.e.} it exchanges the two branches that meet at $a$. It is easy to show that $\theta(q)$ has a series expansion of the form
\begin{equation}
\theta(q) = a - (q-a) + \mathcal{O}(q-a)^2.
\end{equation}

\subsection{The topological recursion}

Now given a spectral curve $(S, x,y, W_2^0)$, we construct recursively an \emph{infinite tower of meromorphic differentials} on $S^n$. Let $\{ W^g_n \}$ be an infinite sequence of degree $n$ meromorphic differentials $W^g_n(p_1, \ldots, p_n)$ for all integers $g\geq 0$ and $n > 0$ satisfying the condition $2g - 2 + n \geq 0$. We say that the differentials with $2g-2+n > 0$ are \emph{stable}; $W_2^0(p_1, p_2)$ is the only unstable differential, given by the canonical bilinear differential of the second kind. We will call these differentials \emph{correlation functions}.

\begin{defn}\label{d:curlyEO}
Let $\mathbf{p} = \{p_1, \ldots, p_n \}$. With the purpose of generalizing the recursion in the next section, we define meromorphic differentials:
\begin{equation}\label{e:curlyEO}
\mathcal{W}^g_{2,n}(q_1,q_2; \mathbf{p}) = W^{g-1}_{n+2} (q_1, q_2, \mathbf{p}) +\sum'_{\substack{g_1 + g_2 = g \\ \mathbf{r}_1 \cup \mathbf{r}_2 =  \mathbf{p}}} W^{g_1}_{|\mathbf{r}_1|+1}(q_1, \mathbf{r}_1) W^{g_2}_{|\mathbf{r}_2|+1}(q_2, \mathbf{r}_2).
\end{equation}
The sum is over all non-negative integer pairs $(g_1,g_2) \in \mathbb{N}^2$ such that $g_1+g_2 = g$, and all non-intersecting subsets $\mathbf{r}_1, \mathbf{r}_2 \subseteq \mathbf{p}$ such that $\mathbf{r}_1 \cup \mathbf{r}_2 = \mathbf{p}$. Moreover, the prime over the summation symbol means that we exclude the cases $(g_1, \mathbf{r}_1) = (0, \emptyset)$ and $(g_2, \mathbf{r}_2) = (0, \emptyset)$.
\end{defn}

\begin{defn}
We define the \emph{Eynard kernel} $K_2(p_0;q_1,q_2)$ by
\begin{equation}
K_2(p_0;q_1,q_2) = - \frac{dS_{q_1,0}(p_0)}{\omega(q_1,q_2)},
\end{equation}
where $dS_{q_1,0}(p_0)$ is the canonical normalized Abelian differential of the third kind: 
\begin{equation}
dS_{q_1,0}(p_0) = \int_0^{q_1} W^0_2(p_0, \cdot),
\end{equation}
with $0$ an arbitrary base point (it has simple poles at $p_0=q_1$ and $p_0=0$ with respective residues $+1$ and $-1$), and $\omega(q_1,q_2)$ is defined in terms of the functions $x$ and $y$ by
\begin{equation}
\omega(q_1,q_2) = (y(q_1) - y(q_2)) \mathrm{d} x(q_1).
\end{equation}
Note that $1/\mathrm{d} x(q_1)$ here is the contraction operator with respect to the vector field $\frac{1}{\mathrm{d} x/\mathrm{d}q_1} \frac{\partial}{\partial q_1}$.
\end{defn}

We are now ready to define the Eynard-Orantin topological recursion:
\begin{defn}\label{d:recursionEO}
Let $\mathbf{p} = \{p_1, \ldots, p_{n} \}$. Let $(S,x,y,W^0_2)$ be a spectral curve, and denote by $\triangle$ the set of simple ramification points of $x$ and $\theta$ the corresponding deck transformation defined locally near $a \in \triangle$. We say that the meromorphic differentials $W_n^g$ satisfy the \emph{Eynard-Orantin topological recursion}  if \cite{Eynard:2007,Eynard:2008}:
\begin{equation}\label{e:recursionEO}
W^g_{n+1}(p_0, \mathbf{p}) =  \sum_{a \in \triangle} \underset{q=a}{\text{Res}} \left( K_2(p_0;q, \theta(q) ) \mathcal{W}^g_{2,n}(q, \theta(q); \mathbf{p} ) \right).
\end{equation}
From Definition \ref{d:curlyEO}, we see that the recursion is on the integer $2g - 2 + n$, which is why it is called a \emph{topological recursion}. The initial condition of the recursion is given by the canonical bilinear differential $W_2^0$ defined above.\footnote{Note that the recursion kernel in \cite{Eynard:2007,Eynard:2008} differs from ours by the fact that the numerator $dS_{q,0}(p_0)$ is replaced by $\frac{1}{2} dS_{q, \theta(q)}(p_0)$. However, it is straightforward to show that the two formulations are equivalent. It follows from the fact that the term multiplied by the recursion kernel is symmetric under the exchange $q \mapsto \theta(q)$, hence we can replace the recursion kernel by its symmetrization, which amounts to replacing $dS_{q,0}(p_0)$ by $\frac{1}{2} dS_{q, \theta(q)}(p_0)$. We would like to thank Bertrand Eynard for discussions on this topic.}
\end{defn}

The correlation functions $W^g_n$ are well defined meromorphic differentials on $S^n$.  It turns out that they satisfy many fascinating properties. First, they are symmetric differentials, which is far from obvious from the definition.  Also, one can show that they satisfy the \emph{dilaton equation}:
\begin{thm}[\cite{Eynard:2007,Eynard:2008}]
\begin{equation}
W^g_{n} (\mathbf{p})= \frac{1}{2g-2+n} \sum_{a \in \triangle}\underset{q=a}{{\rm Res}} \Phi(q) W_{n+1}^g(q,\mathbf{p}),
\end{equation}
where $\Phi(q)$ is a primitive of $y(t) \mathrm{d} x(t)$
\begin{equation}
\Phi(q) = \int_0^q y(t) \mathrm{d} x(t),
\end{equation}
for an arbitrary base point $0$.
\end{thm}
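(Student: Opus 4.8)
The plan is to prove the identity by induction on the integer $2g-2+n$, guided by the observation that it is an instance of Euler's relation for homogeneous functions. Scaling $y \mapsto \lambda y$ leaves $W_2^0$ and the third-kind differential $dS_{q',0}$ untouched but sends the kernel $K_2$ to $\lambda^{-1}K_2$, while $\Phi \mapsto \lambda\Phi$. A short induction using the recursion \eqref{e:recursionEO} then shows that every stable $W^g_n$ is homogeneous of degree $2-2g-n$ in $y$: the base cases $(g,n)=(0,3)$ and $(1,1)$ both scale as $\lambda^{-1}$, and the inductive step matches because each term of $\mathcal{W}^g_{2,n}$ in \eqref{e:curlyEO} carries total degree $2-2g-n$ no matter how $g$ and $\mathbf p$ are partitioned, so that $W^g_{n+1}$ acquires degree $-1+(2-2g-n)=2-2g-(n+1)$. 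In particular the two sides of the claimed equation have the same degree $2-2g-n$, and the factor $2g-2+n$ is exactly minus this degree; recognizing the right-hand side as the infinitesimal generator of the scaling is what remains to be established.

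Concretely I would start from the right-hand side and insert the recursion \eqref{e:recursionEO} for $W^g_{n+1}(q,\mathbf p)$, turning it into a double residue
\[
\sum_{a\in\triangle}\underset{q=a}{\mathrm{Res}}\,\Phi(q)\sum_{b\in\triangle}\underset{q'=b}{\mathrm{Res}}\,K_2(q;q',\theta(q'))\,\mathcal{W}^g_{2,n}(q',\theta(q');\mathbf p).
\]
Since the stable correlation functions have poles only at the ramification points of $x$, and $\Phi$ is holomorphic there (the simple zero of $\mathrm dx$ at $a$ meets the regularity of $y$), the only source of a $q$-pole is the factor $dS_{q',0}(q)$ inside $K_2$. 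The heart of the argument is to carry out the $q$-residue first by exchanging the two nested contours around a common ramification point; this exchange produces a contribution from the coincident pole $q=q'$, and because $dS_{q',0}(q)$ has residue $+1$ there one obtains $\underset{q=q'}{\mathrm{Res}}\,\Phi(q)\,dS_{q',0}(q)=\Phi(q')$. Under the residue sum the combination $\Phi(q)K_2(q;q',\theta(q'))$ is thereby replaced by $-\Phi(q')/\omega(q',\theta(q'))$, and after symmetrizing in $q'\leftrightarrow\theta(q')$ (using that $\mathcal{W}^g_{2,n}$ is symmetric in its first two arguments while $\omega$ is antisymmetric) the whole expression collapses onto a sum of residues at the ramification points of $x$ against $\mathcal{W}^g_{2,n}$.

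The main obstacle is precisely this residue exchange together with the bookkeeping that produces the integer $2g-2+n$. The exchange of the two contours must be justified with care, since it is the correction term at $q=q'$ — and not the naively commuted residue, which vanishes because $\Phi\,dS_{q',0}$ has no pole at $a$ — that carries all the information. Matching the resulting sum to $(2g-2+n)\,W^g_n(\mathbf p)$ is then the genuine crux: the factor must arise from how the lower correlation functions inside $\mathcal{W}^g_{2,n}$ reassemble, and I expect the cleanest way to close the argument is to phrase this last step as the variational statement that $\sum_{a\in\triangle}\underset{q=a}{\mathrm{Res}}\,\Phi(q)\,W^g_{n+1}(q,\mathbf p)$ equals $\tfrac{d}{d\lambda}\big|_{\lambda=1}W^g_n[\lambda y]$, the derivative of $W^g_n$ along the scaling direction, and then to invoke the homogeneity of the first paragraph (with the overall sign fixed at the end). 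Establishing this variational identity by induction on the recursion — essentially the loop-insertion computation, where differentiating the kernel and the lower $W$'s must reorganize into a single $\Phi$-weighted insertion of $W^g_{n+1}$ — is where nearly all the work lies.
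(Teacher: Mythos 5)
First, a point of reference: the paper states this theorem with a citation to \cite{Eynard:2007,Eynard:2008} and supplies no proof of its own, so your attempt can only be measured against the argument in those references. Your first paragraph is sound as far as it goes: the homogeneity of $W^g_n$ of degree $2-2g-n$ under $y\mapsto\lambda y$ does follow by the induction you describe (each kernel $K_2$ scales as $\lambda^{-1}$, $W^0_2$ is inert, and every term of $\mathcal{W}^g_{2,n}$ carries the same total degree), and the theorem is indeed Euler's relation for this homogeneity. But this reduction converts the statement into the variational identity $\sum_{a\in\triangle}\mathrm{Res}_{q=a}\,\Phi(q)W^g_{n+1}(q,\mathbf p)=-\tfrac{d}{d\lambda}\big|_{\lambda=1}W^g_n[\lambda y]$, which is the loop-insertion formula of \cite{Eynard:2007} --- a theorem of essentially the same depth as the one being proved --- and you explicitly leave it as ``where nearly all the work lies.'' As it stands the proposal is therefore a correct reduction plus an unproven main lemma; that is a genuine gap, not a finished proof.

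The direct computation in your second paragraph also takes a turn that cannot be completed as set up. Rooting the recursion at $q$ (which already requires the nontrivial symmetry of $W^g_{n+1}$) and exchanging the contours does correctly isolate the coincidence term and collapses the right-hand side to $-\sum_{a}\mathrm{Res}_{q'=a}\,\bigl(\Phi(q')/\omega(q',\theta(q'))\bigr)\,\mathcal W^g_{2,n}(q',\theta(q');\mathbf p)$; but from here there is no mechanism to reach $(2g-2+n)\,W^g_n(\mathbf p)$, because $\Phi(q')$ multiplies the entire product in $\mathcal W^g_{2,n}$ rather than a single factor, so neither the induction hypothesis nor the definition of $W^g_n$ applies. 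The proof in the cited references keeps one of the original points, say $p_1$, as the root and treats $q$ as a spectator inserted into $\mathcal W^g_{2,n-1}$. The stable factors carrying $q$ then reassemble into $(2g-2+n-1)\,W^g_n(\mathbf p)$ by the induction hypothesis, while the two unstable terms $W^0_2(q',q)$ and $W^0_2(\theta(q'),q)$ --- which your choice of root eliminates from the outset --- contribute through $\sum_{a}\mathrm{Res}_{q\to a}\,\Phi(q)W^0_2(q,\xi)=y(\xi)\,\mathrm dx(\xi)$, producing $y(q')\,\mathrm dx(q')$ and $y(\theta(q'))\,\mathrm dx(q')$, whose difference cancels the denominator $\omega(q',\theta(q'))$ of the kernel and supplies exactly the missing $+1$. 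That cancellation is the source of the factor $2g-2+n$, and it is the step absent from both routes in your plan; supplying it (or, equivalently, proving the variational formula you invoke) is what remains to be done.
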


It is thus natural to define ``level $n=0$'' objects, that are just numbers:
\begin{defn}
The numbers $F_g$, $g \geq 2$, that we call \emph{free energies}, are constructed from the one-forms $W_1^g(q)$ by:
\begin{equation}
F_g = \frac{1}{2g-2} \sum_{a \in \triangle} \underset{q=a}{\text{Res}} \Phi(q) W_1^g(q).
\end{equation}
\end{defn}

To summarize, given a \emph{spectral curve} $(S,x,y,W^0_2)$, the Eynard-Orantin topological recursion constructs recursively an infinite tower of symmetric meromorphic differentials $W_n^g(p_1, \ldots, p_n)$ satisfying the stability condition $2g-2+n>0$. The recursion kernel is the \emph{Eynard kernel}. We can extend the recursion and define auxiliary numbers, the free energies $F_g$. It can be shown that the differentials $W_n^g$ and the free energies $F_g$ satisfy all kinds of nice properties, see \cite{Eynard:2007,Eynard:2008}.

\subsection{Degenerations of Riemann surfaces}

\label{s:degeneration1}

For the purpose of generalizing the recursion in the next section, it is very useful to consider a pictorial representation of the structure of the Eynard-Orantin recursion that stems from degenerations of Riemann surfaces. In fact, this representation is at the foundation of why the recursion has so many applications in enumerative geometry.

What we want to understand are the terms on the right-hand-side of the definition of $\mathcal{W}^g_{2,n}(q_1,q_2; \mathbf{p})$, see Definition \ref{d:curlyEO}. Those arise through degenerations of Riemann surfaces.

Consider a genus $g$ compact Riemann surface. A degeneration of a Riemann surface consists in the Riemann surface being ``pinched'' along a cycle. There are two distinct types of degenerations:
\begin{enumerate}
\item The pinched cycle is homologically trivial, in which case the degeneration results in two disconnected compact Riemann surfaces $S_1$ and $S_2$ of genus $g_1$ and $g_2$ with $g_1 + g_2 = g$, each of which with a marked point $q_1 \in S_1$ and $q_2 \in S_2$; 
\item The pinched cycle is not homologically trivial, in which case the degeneration results in a single Riemann surface $S_1$ of lower genus $g_1 = g-1$, with two marked points $q_1, q_2 \in S_1$.
\end{enumerate}

The right-hand-side of \eqref{e:curlyEO} is precisely a summation over all possible degenerations of Riemann surfaces. More precisely, we associate to a correlation function $W^g_n(p_1,\ldots,p_n)$ a compact Riemann surface of genus $g$ with $n$ marked points, labeled by the entries $\{p_1,\ldots, p_n\}$. 

Now consider \eqref{e:curlyEO} for $\mathcal{W}^g_{2,n}(q_1,q_2; \mathbf{p})$. The right-hand-side of \eqref{e:curlyEO} then corresponds to a summation over all possible degenerations of a genus $g$ compact Riemann surface with $n$ marked points labeled by the entries $\{p_1, \ldots, p_n\}$; the two marked points resulting from the generations are labeled by $q_1$ and $q_2$, which are required not to coincide with the original marked points $p_1$ to $p_n$.

For instance, at genus $g=0$, all cycles are homologically trivial, so all degenerations result in two genus $0$  Riemann surfaces $S_1$ and $S_2$. That is, only the second term in the right-hand-side of \eqref{e:curlyEO} survives. The summation is then over all possible ways of splitting the original marked points $\{ p_1, \ldots, p_n \}$ into $S_1$ and $S_2$, each of which has also an additional marked point $q_1 \in S_1$ and $q_2 \in S_2$.

At higher genus, the same type of degenerations exists, but now $S_1$ and $S_2$ must be of genus $g_1$ and $g_2$ with $g_1+g_2 = g$. There is also the possibility of degenerating along a non-trivial cycle, in which case the result is a single Riemann surface $S_1$ of genus $g_1=g-1$, with all the marked points $\{ p_1, \ldots, p_n \}$ plus the two additional marked points $q_1$ and $q_2$. This is the first term on the right-hand-side of \eqref{e:curlyEO}.

\begin{figure}[t]
\begin{center}
\includegraphics[width=\textwidth]{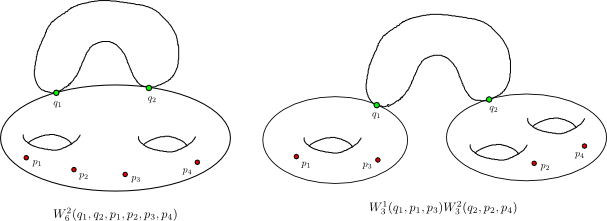} 
\begin{quote}
\caption{ Two degenerations and the corresponding terms entering into the definition of $\mathcal{W}^3_{2,4}(q_1,q_2;p_1,p_2,p_3,p_4)$.} \label{f:degEO}
\end{quote}
\end{center}
\end{figure}

An equivalent, perhaps more constructivist, point of view on degenerations of Riemann surfaces comes from gluing Riemann surfaces. Instead of starting with a genus $g$ compact Riemann surface and looking at all possible degenerations coming from pinching a cycle, one could try to find all possible ways of \emph{constructing} a compact Riemann surface of genus $g$ by gluing a cylinder. The answer would of course be the same. We can either start with two Riemann surfaces $S_1$ and $S_2$ of genus $g_1$ and $g_2$ with $g_1+g_2=g$, and glue a cylinder at two marked points $q_1 \in S_1$ and $q_2 \in S_2$ to obtain a new Riemann surface $S$ of genus $g$; or we can start with a single Riemann surface $S_1$ of genus $g_1=g-1$, and glue a cylinder at two marked points $q_1, q_2 \in S_1$, thus creating one more handle and obtaining a new Riemann surface $S$ of genus $g$. 
This picture will be very useful for generalizing the recursion for arbitrary ramification.

Examples of degenerations and the corresponding terms in \eqref{e:curlyEO} are shown in figure \ref{f:degEO}.

\section{A topological recursion for arbitrary ramification}

\label{s:arbitrary}

In this section we generalize the Eynard-Orantin topological recursion for meromorphic functions $x$ and $y$ with arbitrary ramification. Our generalization is based on the work of Prats Ferrer \cite{Prats:2010}, where the case of double ramification points was studied from the point of view of Cauchy matrix models. Here, we propose our generalization by studying degenerations of Riemann surfaces. We then argue in the next section that the generalization is appropriate by showing in examples that it restores symplectic invariance for cases where $x$ or $y$ (or both) have arbitrary ramification.

\subsection{The geometry}

First, we modify the definition of a spectral curve, Definition \ref{d:spectralEO}, by dropping the assumption on the ramification points of $x$ (we however keep the requirement that the ramification points of $x$ and $y$ do not coincide):
\begin{defn}\label{d:spectral}
A \emph{spectral curve} is a quadruple $(S, x, y, W_2^0)$ where:
\begin{itemize}
\item $S$ is a compact Riemann surface;
\item $x$ and $y$ are meromorphic functions on $S$;
\item $W_2^0(p_1,p_2)$ is a canonical bilinear differential of the second kind.
\end{itemize}
We assume that the ramification points of $x$ and $y$ do not coincide.
\end{defn}

Let us introduce some notation. Recall that the ramification index $e_x(q)$ of $x$ at a point $q \in S$ is given by the number of branches that collide at $q$. All but a finite number of points have $e_x(q)=1$; the points $a \in S$ with $e_x(a) > 1$ are the ramification points.

Assume that $x$ is degree $m$. Let $\triangle$ be the set of ramification points of $x$. We split $\triangle$ into subsets $\triangle_i$, with $i=2,\ldots,m$, according to the ramification indices. In the case studied by Eynard and Orantin, and presented in the previous section, $\triangle = \triangle_2$, that is, it was assumed that all ramification points have $e_x(a) = 2$.

Consider a ramification point $a \in \triangle_k$ with ramification index $e_x(a)=k$. This means that $k$ branches of $x$ collide at $a$. Let $U$ be a neighborhood of $a$. Then there exists $k-1$ distinct non-trivial maps
\begin{equation}
\theta^{(j)}: U \to U, \quad j=1,\ldots,k-1
\end{equation}
that permute the colliding branches; they are the $k-1$ distinct non-trivial deck transformations
\begin{equation}
x \circ \theta^{(j)} = x
\end{equation}
with the requirement that $\theta^{(j)}(a) = a$. With appropriate ordering of the branches, it is clear that
\begin{equation}
\theta^{(m)}( \theta^{(n)} ( q)) = \theta^{( m + n \text{ mod } k)}(q),
\end{equation}
where we defined the trivial map
\begin{equation}
\theta^{(0)}(q) = q.
\end{equation}
It is also straightforward to show that the maps $\theta^{(j)}$, $j=1,\ldots,k-1$, have series expansion of the form
\begin{equation}
\theta^{(j)}(q) = a + \mathrm{e}^{2 \pi i (j/k)} (q-a) + \mathcal{O}(q-a)^2.
\end{equation}

\subsection{A generalized recursion}

We are now in a position to generalize the Eynard-Orantin recursion. Just as before, given a spectral curve $(S, x,y, W_2^0)$, we construct recursively an infinite tower of correlation functions $W^g_n$ on $S^n$. The first objects we need to generalize are the meromorphic differentials in Definition \ref{d:curlyEO}.

Recall that a partition of a set $X$ is a collection of nonempty subsets of $X$ such that every element of $X$ is in exactly one of these subsets. To fix notation, a partition $\mu$ of $X$ is a collection of $p = \ell(\mu)$ subsets $\mu_i \subseteq X$, $i=1,\ldots,p$, such that $\cup_{i=1}^p \mu_i = X$ and $\mu_i \cap \mu_j = \emptyset$ for all $i \neq j$. We denote by $|\mu_i|$ the number of elements in the subset $\mu_i$; clearly $\sum_{i=1}^p |\mu_i| = |X|$ and $|\mu_i| \neq 0$ for all $i=1,\ldots,p$. The trivial partition is of course $\mu = \{ X \}$, which has length $p = \ell(\mu ) = 1$, and $\mu_1 = X$.

\begin{defn}\label{d:curly}
Let $\mathbf{p} = \{ p_1, \ldots, p_n \}$, and $\mathbf{q} = \{q_1, \ldots, q_k \}$. We define meromorphic differentials:
\begin{equation}\label{e:curly}
\mathcal{W}^g_{k,n}(\mathbf{q}; \mathbf{p}) = \sum_{\mu \in \text{Partitions}(\mathbf{q})} \sum'_{\substack{\sum_{i=1}^{\ell(\mu)} g_i = g + \ell(\mu) - k \\ \cup_{i=1}^{\ell(\mu)} \mathbf{r}_i =  \mathbf{p}}} \left( \prod_{i=1}^{\ell(\mu)} W^{g_i}_{|\mu_i| + |\mathbf{r}_i|} (\mu_i, \mathbf{r}_i)  \right).
\end{equation}
The first summation is over partitions $\mu$ of the set $\mathbf{q}$. The second summation involves summing over all possible $\ell(\mu)$-tuples of non-negative integers $(g_1, \ldots, g_{\ell(\mu)})$, where $\ell(\mu)$ is the number of subsets in the partition $\mu$, satisfying $\sum_{i=1}^{\ell(\mu)} g_i = g + \ell(\mu) - k$. It also involves summing over all possible non-intersecting subsets $\mathbf{r}_i \subseteq \mathbf{p}$, for $i=1, \ldots, \ell(\mu)$, such that $\cup_{i=1}^{\ell(\mu)} \mathbf{r}_i =  \mathbf{p}$. Finally, the prime over the second summation symbol means that we exclude the cases with $(g_i, |\mu_i| + |\mathbf{r}_i|) = (0,1)$ for some $i$.
\end{defn}

In the next subsection we will justify this particular definition from degenerations of Riemann surfaces, and then will be study how it restores symplectic invariance for arbitrary ramification. But let us first show that it reduces to Definition \ref{d:curlyEO} when $x$ has only simple ramification points, and to eq.(4.19) in \cite{Prats:2010} when $x$ has simple and double ramification points.

\begin{example}
Assume that $x$ has only simple ramification points, that is, $\triangle = \triangle_2$. Thus, every ramification point has ramification index $k=2$, and so we are interested in computing $\mathcal{W}^g_{2,n}(q_1,q_2; \mathbf{p})$. There are only two partitions of $\{q_1, q_2\}$, namely the trivial partition $\mu = \{ \{q_1, q_2\} \}$ and the partition in two subsets $\mu = \{ \{q_1 \}, \{q_2\} \}$. In the first case, $\ell(\mu)=1$, so the second sum collapses and produces a single term:
\begin{equation}
W^{g-1}_{n+2} (q_1, q_2, \mathbf{p}).
\end{equation}
For the second partition, $\ell(\mu)=2$, and the second summation becomes
\begin{equation}
\sum'_{\substack{g_1 + g_2 = g \\  \mathbf{r}_1 \cup \mathbf{r}_2 =  \mathbf{p}}} W^{g_1}_{|\mathbf{r}_1| + 1}(q_1, \mathbf{r}_1) W^{g_2}_{|\mathbf{r}_2| + 1}(q_2, \mathbf{r}_2). 
\end{equation}
Those are precisely the terms in Definition \ref{d:curlyEO}.
\end{example}

\begin{example}
In the case where $x$ has simple and double ramification points, we are interested in computing both $\mathcal{W}^g_{2,n}(q_1,q_2;\mathbf{p})$ as above and $\mathcal{W}^g_{3,n}(q_1,q_2,q_3; \mathbf{p})$. Let us look at the latter.

There are three types of partitions of $\{q_1,q_2,q_3\}$: the trivial partition $ \{ \{q_1,q_2,q_3 \} \}$; partitions of length two $\{ \{q_1, q_2\}, \{q_3\}\}$, $\{ \{q_1, q_3\}, \{q_2\} \}$ and $\{ \{q_2, q_3\}, \{q_1\} \}$; and a partition of length three $\{ \{q_1\}, \{q_2\}, \{q_3\} \}$.

For the partition of length one, the second summation collapses, and we get the term
\begin{equation}
W^{g-2}_{n+ 3} (q_1, q_2, q_3, \mathbf{p}).
\end{equation}
For the partitions of length two, we get the three terms
\begin{gather}
\sum'_{\substack{g_1 + g_2 = g-1 \\  \mathbf{r}_1 \cup \mathbf{r}_2 =  \mathbf{p}}} W^{g_1}_{|\mathbf{r}_1| + 2}(q_1, q_2, \mathbf{r}_1) W^{g_2}_{|\mathbf{r}_2| + 1}(q_3, \mathbf{r}_2),\\
\sum'_{\substack{g_1 + g_2 = g-1 \\  \mathbf{r}_1 \cup \mathbf{r}_2 =  \mathbf{p}}} W^{g_1}_{|\mathbf{r}_1| + 2}(q_1, q_3, \mathbf{r}_1) W^{g_2}_{|\mathbf{r}_2| + 1}(q_2, \mathbf{r}_2),\\
\sum'_{\substack{g_1 + g_2 = g-1 \\  \mathbf{r}_1 \cup \mathbf{r}_2 =  \mathbf{p}}} W^{g_1}_{|\mathbf{r}_1| + 2}(q_2, q_3, \mathbf{r}_1) W^{g_2}_{|\mathbf{r}_2| + 1}(q_1, \mathbf{r}_2).
\end{gather}
As for the partition of length three, we get the term
\begin{equation}
\sum'_{\substack{g_1 + g_2+g_3 = g \\  \mathbf{r}_1 \cup \mathbf{r}_2 \cup \mathbf{r}_3 =  \mathbf{p}}} W^{g_1}_{|\mathbf{r}_1| + 1}(q_1, \mathbf{r}_1) W^{g_2}_{|\mathbf{r}_2| + 1}(q_2, \mathbf{r}_2)W^{g_3}_{|\mathbf{r}_3| + 1}(q_3, \mathbf{r}_3). 
\end{equation}
Those are precisely the terms obtained in eq.(4.19) of \cite{Prats:2010}.
\end{example}

Now that we understand the curly $\mathcal{W}$'s, we also need to generalize the Eynard kernel. The generalization is straightforward from \cite{Prats:2010}:
\begin{defn}\label{d:kernel}
Let $\mathbf{q} = \{q_1, \ldots, q_k \}$. We define the \emph{generalized Eynard kernel} $K_k(p_0;\mathbf{q})$ by
\begin{equation}
K_k(p_0;\mathbf{q}) = - \frac{dS_{q_1,0}(p_0)}{\prod_{i=2}^{k} \omega(q_1, q_i)},
\end{equation}
where as before 
\begin{equation}
dS_{q_1,0}(p_0) = \int_0^{q_1} W^0_2(p_0, \cdot),
\end{equation}
and $\omega(q_1,q_i)$ is defined in terms of the functions $x$ and $y$ by
\begin{equation}
\omega(q_1,q_i) = (y(q_1) - y(q_i)) \mathrm{d} x(q_1).
\end{equation}
\end{defn}

Putting this together, we finally define a generalized recursion:
\begin{defn}\label{d:recursion}
Let $\mathbf{p} = \{ p_1, \ldots, p_n \}$, and $\mathbf{q} = \{q_1, \ldots, q_k \}$. Let $(S,x,y,W^2_0)$ be a spectral curve. Assume that $x$ has degree $m$, and denote by $\triangle$ be the set of ramification points of $x$, with subsets $\triangle_i$ of ramification index $i$, $i=2, \ldots, m$. For each ramification point $a \in \triangle_i$, let $\mathbf{d}= \{\theta^{(1)}(q), \ldots, \theta^{(i-1)}(q) \}$ be the corresponding deck transformations.

We say that the meromorphic differentials $W_n^g$ satisfy the \emph{generalized Eynard-Orantin topological recursion} if:
\begin{equation}\label{e:recursion}
W^g_{n+1}(p_0, \mathbf{p}) =  \sum_{a \in \triangle} \underset{q=a}{\text{Res}} \left( \sum_{\emptyset \neq \mathbf{d}' \subseteq \mathbf{d}}  K_{|\mathbf{d}'|+1}(p_0;q,\mathbf{d}') \mathcal{W}^g_{|\mathbf{d}'|+1,n}(q, \mathbf{d}'; \mathbf{p} ) \right).
\end{equation}
The second summation is over all non-empty subsets of $\mathbf{d}= \{\theta^{(1)}(q), \ldots, \theta^{(i-1)}(q) \}$, where $i$ is the ramification index of the corresponding ramification point $a \in \triangle$. In order words, for a ramification point $a$ of index $i$, all curly $\mathcal{W}^g_{s,n}$ with $2 \leq s \leq i$ participate in the recursion.
\end{defn}

This recursion is a straightforward generalization of eq.(4.18) in \cite{Prats:2010}. Let us show that it reduces to Definition \ref{d:recursionEO} for the case of simple ramification points, and to eq.(4.18) of \cite{Prats:2010} for simple and double ramification points.

\begin{example}
Assume that $x$ has only simple ramification points, that is, $\triangle = \triangle_2$. In this case, for each ramification point $a \in \triangle$, we have that $\mathbf{d} = \{ \theta(q) \}$, since the ramification index is two. Therefore, the second summation in \eqref{e:recursion} collapses and we obtain
\begin{equation}
W^g_{n+1}(p_0, \mathbf{p}) =  \sum_{a \in \triangle} \underset{q=a}{\text{Res}}\left( K_2(p_0;q,\theta(q)) \mathcal{W}^g_{2,n}(q, \theta(q); \mathbf{p} ) \right),
\end{equation}
which is just \eqref{e:recursionEO} of Definition \ref{d:recursionEO}.
\end{example}

\begin{example}
Assume that $x$ has only simple and double ramification points, hence $\triangle = \triangle_2 \cup \triangle_3$. For each $a \in \triangle_2$, we have that $\mathbf{d} = \{ \theta(q) \}$ as above, and they contribute to \eqref{e:recursion} as
\begin{equation}
\sum_{a \in \triangle_2} \underset{q=a}{\text{Res}}\left( K_2(p_0;q,\theta(q) ) \mathcal{W}^g_{2,n}(q, \theta(q); \mathbf{p} ) \right).
\end{equation}
For the ramification points $a \in \triangle_3$, we have that $\mathbf{d} = \{ \theta^{(1)}(q), \theta^{(2)}(q) \}$. Thus the second summation runs over the length one subsets $\{\theta^{(1)}(q)\}$, $\{\theta^{(2)}(q)\}$ and length two subset $\{ \theta^{(1)}(q), \theta^{(2)}(q)\}$. The contribution to the recursion is then
\begin{gather}
\sum_{a \in \triangle_3} \underset{q=a}{\text{Res}} \Big(K_2(p_0;q,\theta^{(1)}(q) ) \mathcal{W}^g_{2,n}(q, \theta^{(1)}(q); \mathbf{p} ) + K_2(p_0;q,\theta^{(2)}(q) ) \mathcal{W}^g_{2,n}(q, \theta^{(2)}(q); \mathbf{p} )\nonumber \\ + K_3(p_0;q,\theta^{(1)}(q), \theta^{(2)}(q)) \mathcal{W}^g_{3,n}(q, \theta^{(1)}(q), \theta^{(2)}(q); \mathbf{p} )  \Big).
\end{gather}
These contributions add up to precisely eq.(4.18) in \cite{Prats:2010}. We see that for double ramification points, both $\mathcal{W}^g_{2,n}$ and $\mathcal{W}^g_{3,n}$ contribute.
\end{example}

Finally, as in \cite{Prats:2010}, we can define free energies just as before:
\begin{defn}
The \emph{free energies} $F_g$, $g \geq 2$ ,are constructed from the one-forms $W_1^g(q)$ by:
\begin{equation}
F_g = \frac{1}{2g-2} \sum_{a \in \triangle} \underset{q=a}{\text{Res}} \Phi(q) W_1^g(q).
\end{equation}
\end{defn}

\subsection{Gluing Riemann surfaces}

The key step in the generalization of the recursion resides in the definition of the curly $\mathcal{W}$'s, namely Definition \ref{d:curly}. We will justify our generalization by gluing Riemann surfaces.

As before, we associate to $W^g_n(p_1,\ldots,p_n)$ a compact Riemann surface of genus $g$ with $n$ marked points, labeled by the entries $\{p_1, \ldots, p_n\}$. We saw in subsection \ref{s:degeneration1} that the right-hand-side of the definition of $\mathcal{W}^g_{2,n}(q_1,q_2; \mathbf{p})$ is simply a summation over all possible degenerations of a genus $g$ compact Riemann surface with $n$ marked points labeled by the entries of $\mathbf{p}$. Equivalently, we can understand these degenerations by looking at all possible ways of obtaining a genus $g$ compact Riemann surface with $n$ marked points by gluing a cylinder.

We now want to understand the right-hand-side of the definition of $\mathcal{W}^g_{k,n}(q_1,\ldots,q_k; \mathbf{p})$ for arbitrary $k \geq 2$, see Definition \ref{d:curly}. The main difference is in the numbers of $q$'s. But for $k=2$, the $q$'s are associated to the marked points where we glue the cylinder. So it seems that the generalization should involve gluing a sphere with $k$ marked points, instead of a cylinder (which is a sphere with two marked points).

Indeed, this is precisely what \eqref{e:curly} does for you. Consider first the case $\mathcal{W}^g_{3,n}(q_1,q_2,q_3; \mathbf{p})$. The right-hand-side of \eqref{e:curly} is a summation over all possible ways of obtaining a genus $g$ compact Riemann surface with $n$ marked points by gluing a sphere with three marked points. Indeed, at genus $g=0$, there is a single possibility; one must start with three Riemann surfaces $S_1$, $S_2$ and $S_3$ with genus $g_1+g_2+g_3=g$, and glue along the marked points $q_1 \in S_1$, $q_2 \in S_2$ and $q_3 \in S_3$. At genus $g=1$, there is a new possibility; one can also start with only two Riemann surfaces $S_1$ and $S_2$, and glue along two marked points on one of the two (say $q_1, q_2 \in S_1$) and one marked point on the other one (say $q_3 \in S_2$). The gluing along $S_1$ creates a handle, hence increases the genus by one. Thus we must have $g_1 + g_2 = g-1$. As for higher genus $g \geq 2$, there is a third possibility, where we start with a single Riemann surface $S_1$ and glue along three marked points $q_1,q_2,q_3 \in S_1$. This creates two new handles, hence we must have that $g_1 = g-2$. Those are precisely the terms in the right-hand-side of \eqref{e:curly}.

\begin{figure}[t]
\begin{center}
\includegraphics[width=\textwidth]{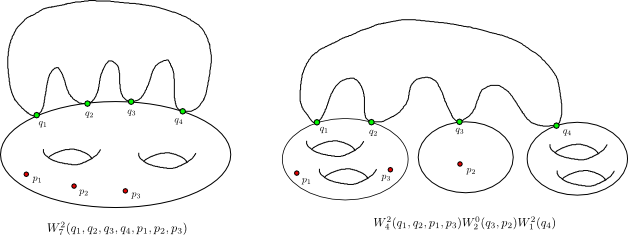} 
\begin{quote}
\caption{ Two degenerations and the corresponding terms entering into the definition of $\mathcal{W}^5_{4,3}(q_1,q_2,q_3,q_4;p_1,p_2,p_3)$.} \label{f:deg}
\end{quote}
\end{center}
\end{figure}

A similar analysis leads to \eqref{e:curly} for higher $k$. The possible gluings are associated to set partitions $\mu$ of $\mathbf{q}$, and the number of Riemann surfaces that one must start with is given by the length $\ell(\mu)$ of the partition $\mu$. For each starting Riemann surface $S_i$, there are $|\mu_i|-1$ handles created in the gluing process. The total number of handles created is $\sum_{i=1}^{\ell(\mu)} |\mu_i| - \ell(\mu) = k - \ell(\mu)$. Thus we must require that the starting Riemann surfaces have genus $\sum_{i=1}^{\ell(\mu)} g_i = g +\ell(\mu)-k$. Note that we must also sum over all possible ways of splitting the marked points $\{ p_1, \ldots, p_n \}$ over the starting Riemann surfaces $S_i$, hence the summation over non-intersecting subsets $\mathbf{r}_i$ of $\mathbf{p}$ with $\cup_{i=1}^{|\mu|} \mathbf{r}_i = \mathbf{p}$.

Examples of degenerations are shown in figure \ref{f:deg}.

\section{Symplectic invariance}
\label{s:si}
\subsection{Definition}

A nice property satisfied by the free energies $F_g$ constructed by Eynard and Orantin is known as \emph{symplectic invariance}. Consider two distinct spectral curves $(S,x,y,W^0_2)$ and $(S,\tilde{x},\tilde{y},W^0_2)$ that differ only by the choice of meromorphic functions $(x,y)$ and $(\tilde{x},\tilde{y})$. Suppose that the meromorphic functions are such that
\begin{equation}
|\mathrm{d} x \wedge \mathrm{d} y| = |\mathrm{d} \tilde{x} \wedge \mathrm{d} \tilde{y}|,
\end{equation}
as symplectic forms on $\mathbb{C}^2$.
Then the claim is that the free energies $F_g$ and $\tilde{F}_g$ cosntructed from the two respective spectral curves are actually equal:
\begin{equation}
F_g = \tilde{F}_g.
\end{equation}
This is of course a highly non-trivial statement, since the ramification points of $x$ and $\tilde{x}$ may be completely different. 

In fact, the non-trivial part of the above statement can be summarized into the statement that for the particular choice $(\tilde{x},\tilde{y}) = (y, x)$, we have $F_g= \tilde{F}_g$. Invariance under this transformation was argued in \cite{Eynard:2007ii}.

\begin{rem}
We note here however that invariance under this transformation is in fact more subtle than expected, since we found a number of counter examples: see subsection \ref{s:SI}. However the examples that we will look at in the next subsection do not share the pathologies of the counter examples presented later on, hence for those cases symplectic invariance is expected to hold.
\end{rem}

To make sense of invariance under $(x,y) \mapsto (y,x)$ using the original Eynard-Orantin recursion we must impose a further assumption on the functions $x$ and $y$. Indeed, for the recursion to be well defined, the function $x$ must have only (and at least one) simple ramification points, and the ramification points of $x$ and $y$ must not coincide. After the transformation, we must of course require the same assumptions on $\tilde{x}$ and $\tilde{y}$ for the recursion to be well defined. In other words, to make sense of the transformation $(\tilde{x},\tilde{y}) = (y,x)$, we must require that not only $x$ has only (and at least one) simple ramification points, but $y$ as well. 

One of the reasons for generalizing the recursion to arbitrary ramification is to make sense of such symplectic transformations for general choices of $x$ and $y$. For instance, in the context of the original Eynard-Orantin topological recursion, we can compute the free energies $F_g$ for a choice of $(x,y)$ with $x$ having only simple ramification points and $y$ having arbitrary ramification. But then, we cannot make sense of the transformation $(\tilde{x},\tilde{y}) = (y,x)$, since the Eynard-Orantin recursion is not well defined for the new functions $(\tilde{x},\tilde{y})$. We need a generalized recursion for arbitrary ramification.

\subsection{Testing our proposal}

In the previous section we proposed a generalized recursion, and justified the key step in the generalization, namely Definition \ref{d:curly}, through a careful analysis of gluing of Riemann surfaces. As for the other steps, namely the definition of the generalized kernel (Definition \ref{d:kernel}) and the generalized recursion (Definition \ref{d:recursion}), they are rather straightforward generalization of the work of Prats Ferrer for double ramification points \cite{Prats:2010}. However, it remains to be shown that the generalized recursion that we propose is appropriate.

One approach would be to try to obtain the generalized recursion as a solution of loop equations for matrix models whose spectral curves have arbitrary ramification, in the spirit of \cite{Prats:2010}. This would be very interesting indeed. However, in this paper we will take a more ``experimental'' approach.

Our main check for the validity of the generalized recursion will involve symplectic invariance, as described in the previous subsection. We will consider situations where $x$ has only simple ramification points and $y$ has ramification points of higher order. Then we can use the standard Eynard-Orantin recursion to compute the $F_g$'s for the pair $(x,y)$. However, to compute the $\tilde{F}_g$ for the pair $(\tilde{x}, \tilde{y}) = (y,x)$, we need to use our generalized recursion. A rather stringent check of our proposal is that the $\tilde{F}_g$ thus computed should be precisely equal to the $F_g$ computed from the Eynard-Orantin recursion. This is what we show next in various examples. We will stick to genus $0$ examples to simplify calculations.

\subsubsection{First example}
As a first example, consider the spectral curve $(S,x,y,W^0_2)$, where $S$ is the Riemann sphere, $W^0_2$ is the canonical bilinear differential on $S$, and the two meromorphic functions are given in parametric form by
\begin{equation}
x = t + \frac{1}{t}, \qquad y = \frac{1}{3} t^3.
\end{equation}
$x$ has two simple ramification points at $a_{\pm}=\pm1$. At each ramification, the deck transformation is simply $\theta(t) = 1/t$.

We can then compute correlation functions and free energies. A few examples of correlation functions:
{\small
\begin{align}
W^0_3(p_0,p_1,p_2) =& \frac{1}{2} \left(\frac{1}{\left(p_0-1\right){}^2 \left(p_1-1\right){}^2
   \left(p_2-1\right){}^2}-\frac{1}{\left(p_0+1\right){}^2 \left(p_1+1\right){}^2
   \left(p_2+1\right){}^2}\right) \mathrm{d} p_0 \mathrm{d} p_1 \mathrm{d} p_2,\nonumber\\
W^1_1(p_0) =&-\frac{p_0 \left(p_0^4-5 p_0^2+1\right)}{3 \left(p_0^2-1\right){}^4} \mathrm{d} p_0,\nonumber\\
W^2_1(p_0) =& \frac{p_0 \left(13 p_0^{16}-143 p_0^{14}+733 p_0^{12}-2273 p_0^{10}+4285 p_0^8-2273 p_0^6+733 p_0^4-143
   p_0^2+13\right)}{9 \left(p_0^2-1\right){}^{10}}\mathrm{d} p_0,\nonumber\\
W^3_1(p_0)=& \frac{1}{81 \left(p_0^2-1\right){}^{16}}\Big(-5482 p_0^{29}+93194 p_0^{27}-748954 p_0^{25}+3769564 p_0^{23}-13231247 p_0^{21}\nonumber\\ &+33949219
   p_0^{19}
-64093893 p_0^{17}+ 84589248 p_0^{15}-64093893 p_0^{13}+33949219 p_0^{11}\nonumber\\&-13231247
   p_0^9
+3769564 p_0^7-748954 p_0^5+93194 p_0^3-5482 p_0 \Big) \mathrm{d} p_0,\nonumber
\end{align}
}
and so on. The resulting free energies for $g=2,3$ are:
\begin{equation}
F_2 = - \frac{13}{72}, \qquad F_3 = \frac{2741}{648} .
\end{equation}

We can now test our recursion by considering the spectral curve $(C, \tilde{x}, \tilde{y}, W^0_2)$, where
\begin{equation}
\tilde{x} = y = \frac{1}{3} t^3, \qquad \tilde{y} = x = t + \frac{1}{t}.
\end{equation}
$\tilde{x}$ has a single ramification point at $a=0$, but it has ramification index three. The two deck transformations at this point are simply $\theta^{(1)}(t) = \mathrm{e}^{2 \pi i/3} t$ and $\theta^{(2)}(t) = \mathrm{e}^{4 \pi i/3} t$.

We can compute correlation functions and free energies. Some examples:
\begin{align}
\tilde{W}^0_3(p_0,p_1,p_2) =& 0,\\
\tilde{W}^1_1(p_0) =& \frac{1}{3 p_0^3} \mathrm{d} p_0,\\
\tilde{W}^2_1(p_0) =& -\frac{13 \left(p_0^2-1\right)}{9 p_0^5} \mathrm{d} p_0,\\
\tilde{W}^3_1(p_0) =& \frac{5482 p_0^8-5482 p_0^6+3100 p_0^2-1225}{81 p_0^{11}} \mathrm{d} p_0.
\end{align}
Note that these are very different from the correlation functions computed from $(x,y)$. However, we obtain the free energies
\begin{equation}
\tilde{F}_2 = - \frac{13}{72}, \qquad \tilde{F}_3 = \frac{2741}{648} ,
\end{equation}
which are precisely the same as $F_2$ and $F_3$! This is a highly non-trivial check that our generalized recursion is appropriate, since it is consistent with symplectic invariance.

\subsubsection{Second example}

The previous example can be generalized to spectral curves $(S,x,y,W^0_2)$, where $S$ is the Riemann sphere, $W^0_2$ is the canonical bilinear differential on $S$, and
\begin{equation}
x = t + \frac{1}{t}, \qquad y = \frac{1}{m} t^m
\end{equation}
for some integer $m \geq 2$. The case with $(x,y)$ can be treated just as above. As for $(\tilde{x},\tilde{y}) = (y,x)$, we note that $\tilde{x}$ has a single ramification point at $a=0$ with ramification index $m$. The deck transformations are given by $\theta^{(k)}(t) = \mathrm{e}^{2 \pi i k /m} t$, for $k=1,\ldots,m-1$. 

We checked that $F_2 = \tilde{F}_2$ and $F_3 = \tilde{F}_3$ for a number of choices of $m$. For instance, for $m=5$, we obtain
\begin{equation}
F_2 = \tilde{F}_2 = - \frac{23}{8} , \qquad F_3 = \tilde{F}_3 = \frac{53459}{40} .
\end{equation}

\subsubsection{Third example}

Consider now a slightly more complicated example. We choose a spectral curve $(S,x,y,W^0_2)$, with $S$ the Riemann sphere, $W_2^0$ the canonical bilinear differential, and
\begin{equation}
x = t + \frac{1}{t}, \qquad y = t^5 + t^4.
\end{equation}
The standard Eynard-Orantin recursion computes correlation functions and free energies. For instance, we get
\begin{align}
W^1_1(p_0) =& \frac{1}{864} \left(\frac{-23 p_0^2+52 p_0-23}{\left(p_0-1\right){}^4}+\frac{27 \left(21 p_0^2+44
   p_0+21\right)}{\left(p_0+1\right){}^4}\right) \mathrm{d} p_0 ,\nonumber\\
W^2_1(p_0) =& \frac{1}{6561 \left(p_0^2-1\right){}^{10}} \Big(-11206282 p_0^{18}+21073570 p_0^{17}+83652574 p_0^{16}-178408841 p_0^{15}\nonumber\\
&-252507089
   p_0^{14}+653452306 p_0^{13}+368487790 p_0^{12}-1346903405 p_0^{11}-188598983 p_0^{10}\nonumber\\
&+1701917665
   p_0^9-188598983 p_0^8-1346903405 p_0^7+368487790 p_0^6+653452306 p_0^5\nonumber\\
&-252507089 p_0^4-178408841
   p_0^3+83652574 p_0^2+21073570 p_0-11206282\Big) \mathrm{d} p_0.\nonumber
\end{align}
The genus $2$ free energy is
\begin{equation}
F_2 = - \frac{759919}{29160}.
\end{equation}

Now consider $(\tilde{x}, \tilde{y}) = (y,x)$. Then $\tilde{x}$ has two ramification points, at $a_1 = -4/5$ and $a_2=0$. The first ramification point $a_1=-4/5$ is simple, and the corresponding deck transformation can be obtained as a power series:
\begin{equation}
\theta(t) =-\frac{4}{5}-\left(t+\frac{4}{5}\right)+\frac{5}{2} \left(t+\frac{4}{5}\right)^2-\frac{25}{4}
   \left(t+\frac{4}{5}\right)^3+\frac{325}{16} \left(t+\frac{4}{5}\right)^4+\mathcal{O}\left(t+\frac{4}{5}\right)^5 .
\end{equation}
The second ramification point $a_2=0$ has ramification index $4$, and the three deck transformations are:
\begin{align}
\theta^{(1)}(t) =& i t+\left(\frac{1}{4}+\frac{i}{4}\right) t^2+\left(\frac{1}{8}-\frac{5 i}{16}\right)
   t^3-\left(\frac{9}{32}+\frac{7 i}{64}\right) t^4-\left(\frac{15}{64}-\frac{157 i}{512}\right)
   t^5+ \mathcal{O}\left(t\right)^6, \nonumber\\
\theta^{(2)}(t)=&-t-\frac{t^2}{2}-\frac{t^3}{4}-\frac{7 t^4}{16}-\frac{17 t^5}{32}+\mathcal{O}\left(t\right)^6,\\
\theta^{(3)}(t)=&-i t+\left(\frac{1}{4}-\frac{i}{4}\right) t^2+\left(\frac{1}{8}+\frac{5 i}{16}\right)
   t^3-\left(\frac{9}{32}-\frac{7 i}{64}\right) t^4-\left(\frac{15}{64}+\frac{157 i}{512}\right)
   t^5+\mathcal{O}\left(t\right)^6 .\nonumber
\end{align}
We then compute correlation functions and free energies. For instance, we get
\begin{align}
\tilde{W}^1_1(p_0) =&\left(\frac{6 p_0^2-5 p_0+10}{64 p_0^4}-\frac{15625 \left(50 p_0^2+65 p_0+14\right)}{1728 \left(5
   p_0+4\right){}^4} \right) \mathrm{d} p_0 ,\\
\tilde{W}^2_1(p_0) =& \frac{1}{6561 p_0^{10} \left(5 p_0+4\right){}^{10}} \Big(109436347656250 p_0^{18}+669694199218750 p_0^{17}\nonumber\\
&+1782837964843750 p_0^{16}+2700686123828125
   p_0^{15}+2549672690546875 p_0^{14}\nonumber\\
&+1537257561831250 p_0^{13}+577436802231250
   p_0^{12}+122815706095000 p_0^{11}\nonumber\\
&+10973538955375 p_0^{10}+157346161650 p_0^9+504462460932
   p_0^8+417749813505 p_0^7\nonumber\\
&+215322815646 p_0^6+14772112641 p_0^5-78646163166 p_0^4-68304314016
   p_0^3\nonumber\\
&-28892859408 p_0^2-6481218240 p_0-617258880\Big) \mathrm{d} p_0.
\end{align}
Note again that the correlation functions $W^2_1(p_0)$ and $\tilde{W}^2_1(p_0)$ are very different. However, computing the genus $2$ free energy, we obtain
\begin{equation}
\tilde{F}_2 = - \frac{759919}{29160},
\end{equation}
which is of course precisely equal to $F_2$!

\subsubsection{Fourth example}

As a last example, we will study a case where both $x$ and $y$ have higher order ramification points. Consider the spectral curve $(S,x,y,W^0_2)$, with $S$ the Riemann sphere, $W_2^0$ the canonical bilinear differential on $S$, and
\begin{equation}
x= (t-1)^4, \qquad y = t^5.
\end{equation}
$x$ has a single ramification point at $a=1$ with ramification index $4$. The deck transformations are 
\begin{equation}
\theta^{(1)}(t) = 1 + i(t-1), \qquad \theta^{(2)}(t)=1 - (t-1), \qquad \theta^{(3)}(t) = 1 - i(t-1).
\end{equation}

We compute the correlation functions and free energies. Some examples:
\begin{align}
W^1_1(p_0) =&\frac{-77 p_0^4+372 p_0^3-680 p_0^2+552 p_0-162}{160 \left(p_0-1\right){}^6} \mathrm{d} p_0 \nonumber\\
W^2_1(p_0) =&\frac{1}{5120000
   \left(p_0-1\right){}^{16}} \Big(1851787360 p_0^{14}-27481058368 p_0^{13}+189393549204 p_0^{12}\nonumber\\
&-803250636784 p_0^{11}+2341503317728
   p_0^{10}-4960971974112 p_0^9+7874782515696 p_0^8\nonumber\\
&-9508486075776 p_0^7+8770192346592 p_0^6-6144199146560
   p_0^5+3215343372105 p_0^4\nonumber\\
&-1217483144740 p_0^3+314954680160 p_0^2-49771491320 p_0+3621948420 \Big)\mathrm{d} p_0\nonumber
\end{align}
The genus $2$ free energy is
\begin{equation}
F_2 = -\frac{815863}{96000}.
\end{equation}

Now consider the case with $(\tilde{x},\tilde{y}) = (y,x)$. $\tilde{x}$ has again a single ramification point, this time at $a=0$, with ramification index $5$ and deck transformations
\begin{equation}
\theta^{(j)}(t) = \mathrm{e}^{2 \pi i j/5} t, \qquad j=1,\ldots,4.
\end{equation}
With this data we can compute correlation functions and free energies. We find:
\begin{align}
\tilde{W}^1_1(p_0) =&\frac{77 p_0^5+90 p_0^4+65 p_0^3+28 p_0^2-8}{160 p_0^7} \mathrm{d} p_0\nonumber\\
\tilde{W}^2_1(p_0) =& \frac{1}{25600
   00 p_0^{19}}\Big( -925893680 p_0^{17}-1073769696 p_0^{16}-769848138 p_0^{15}-340349096 p_0^{14}\nonumber\\
&+170173776
   p_0^{12}+190197280 p_0^{11}+128055552 p_0^{10}+51733632 p_0^9-19702617 p_0^7\nonumber\\
&-18215652 p_0^6-9710740
   p_0^5-2928996 p_0^4+453376 p_0^2+194208 p_0+27456\Big) \mathrm{d} p_0\nonumber
\end{align}
with genus $2$ free energy
\begin{equation}
\tilde{F}_2 = -\frac{815863}{96000},
\end{equation}
again precisely equal to $F_2$!

\subsection{Other properties}
\label{s:other}

In the previous subsection we showed in examples that our generalized recursion is consistent with invariance of the free energies under the transformation $(x,y) \mapsto (y,x)$. This is a highly non-trivial check that our generalized recursion is the appropriate one.

As further checks, we can study other important properties satisfied by the correlation functions constructed from the original Eynard-Orantin recursion.

Some of the important properties are:
\begin{itemize}
\item {\bf Symmetry:} The correlation functions $W^g_{n+1}(p_0, \ldots, p_n)$ are symmetric under permutations of the arguments. Symmetry under permutations involving $p_0$ is highly non-trivial from the definition of the recursion.
\item {\bf Dilaton equation:} The correlation functions satisfy the property:
\begin{equation}
W^g_{n} (\mathbf{p})= \frac{1}{2g-2+n} \sum_{a \in \triangle}\underset{q=a}{{\rm Res}} \Phi(q) W_{n+1}^g(q,\mathbf{p}).
\end{equation}
\item {\bf A particular formula:} The first non-trivial correlation function, $W^0_3(p_0,p_1,p_2)$ satisfies the nice formula:
\begin{equation}
W^0_3(p_0,p_1,p_2) = \sum_{a \in \triangle}\underset{q=a}{{\rm Res}} \frac{W^0_2(q,p_0) W^0_2(q,p_1) W^0_2(q,p_2)}{\mathrm{d} x(q) \mathrm{d} y(q)}.
\end{equation}
\end{itemize}

These properties were proved for the original Eynard-Orantin recursion in \cite{Eynard:2007,Eynard:2008}. As for the generalized recursion, we checked that they are satisfied in all the examples presented in the previous subsection, which is again a highly non-trivial check that our generalization is appropriate. It would however be very nice to provide explicit proofs of these properties for the generalized recursion.

\subsection{Symplectic invariance?}

\label{s:SI}

Symplectic invariance of the free energies is one of their most fascinating properties. However, as we mentioned earlier, it turns out that it is more subtle than expected. In this subsection we study this question in more details. We note however that the problems raised in this subsection do not apply to the examples studied in the previous subsections.

In \cite{BS:2011}, we already noticed that the free energies were not always symplectic invariant, in the case of spectral curves where $x$ and $y$ are not meromorphic on $S$ (they have log singularities). One could then suggest that this is due to the fact that $x$ and $y$ are not meromorphic, since strictly speaking in \cite{Eynard:2007ii} invariance under the transformation $(x,y) \mapsto (y,x)$ has only been argued for meromorphic functions $x$ and $y$.

However, this is not the end of the story. It turns out that it is rather easy to find counter examples of symplectic invariance even with $x$ and $y$ meromorphic. Here is one:
\begin{lem}
Consider the spectral curve $(S,x,y,W^0_2)$, where $S$ is the Riemann sphere, $W^0_2$ is its canonical bilinear differential, and $x$ and $y$ are given in parametric from by:
\begin{equation}
x(t) = t + \frac{1}{t}, \qquad y(t) = (t-b)^2,
\end{equation}
for some constant $b \neq 0,\pm 1$.\footnote{We need to exclude the cases $b=\pm1$ since for those cases the zeroes of $\mathrm{d} x$ and $\mathrm{d} y$ coincide; as for $b=0$, we will discuss this case in the remark below.}
Consider another spectral curve $(S,\tilde{x},\tilde{y},W^0_2)$ with $(\tilde{x},\tilde{y}) = (y,x)$. Then
\begin{equation}
\tilde{F}_2(b) \neq F_2(b),
\end{equation}
for all $b \neq 0, \pm 1$,
and symplectic invariance does not hold.
\end{lem}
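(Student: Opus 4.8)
The plan is to prove the inequality by direct computation, using the crucial observation that \emph{both} $x$ and $y=\tilde x$ have only simple ramification points; hence the entire calculation stays within the original Eynard--Orantin recursion of Definition~\ref{d:recursionEO}, and no feature of the generalized recursion is invoked. This is exactly what makes the example a counterexample to the symplectic invariance claimed in \cite{Eynard:2007ii}. First I would record the ramification data. For $(x,y)$ with $x=t+1/t$ we have $\mathrm{d}x=(1-1/t^2)\,\mathrm{d}t$, so $x$ has the two simple ramification points $t=\pm1$, with deck transformation $\theta(t)=1/t$. For $(\tilde x,\tilde y)=(y,x)$ with $\tilde x=(t-b)^2$, the map $\tilde x$ is a degree-two rational map whose global deck involution is $\theta(t)=2b-t$; by Riemann--Hurwitz it has exactly the two simple ramification points $t=b$ and $t=\infty$. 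The hypothesis $b\neq\pm1$ guarantees that the ramification points of $x$ and $y$ never coincide, so both spectral curves are admissible in the sense of Definition~\ref{d:spectralEO}.

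Next I would compute $F_2(b)$ for $(x,y)$. Since $F_2=\tfrac12\sum_{a\in\triangle}\underset{q=a}{\mathrm{Res}}\,\Phi(q)\,W^2_1(q)$, I must build the ladder of correlation functions $W^1_1,\,W^0_3,\,W^1_2,\,W^2_1$ by repeatedly applying \eqref{e:recursionEO} with the Eynard kernel $K_2(p_0;q,\theta(q))$ and $\theta(q)=1/q$; for $W^2_1$ the relevant combination is $\mathcal{W}^2_{2,0}(q,\theta(q))=W^1_2(q,\theta(q))+W^1_1(q)\,W^1_1(\theta(q))$, which pins down precisely the lower correlators needed. Here $\omega(q,\theta(q))=\bigl(y(q)-y(\theta(q))\bigr)\mathrm{d}x(q)$ and $\Phi(q)=\int^q y\,\mathrm{d}x=\int^q (t-b)^2(1-1/t^2)\,\mathrm{d}t$ are explicit rational expressions, so every $W^g_n$ is a rational differential with coefficients rational in $b$, and all residues at $q=\pm1$ reduce to extracting Laurent coefficients of series expansions. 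Carrying this out yields $F_2(b)$ as an explicit rational function of $b$.

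The computation of $\tilde F_2(b)$ for $(\tilde x,\tilde y)$ follows the same template, but now the kernel is built from $\tilde x=(t-b)^2$ and $\tilde y=t+1/t$, the deck transformation is $\theta(q)=2b-q$, and the residues in $\tilde F_2=\tfrac12\sum_{a}\underset{q=a}{\mathrm{Res}}\,\tilde\Phi(q)\,\tilde W^2_1(q)$ are taken at $q=b$ \emph{and} $q=\infty$. The only mild subtlety is the contribution at $q=\infty$, which I would handle by passing to the local coordinate $s=1/t$ and computing the residue there. Running the recursion gives $\tilde F_2(b)$ as a second explicit rational function of $b$, entirely different in form from the correlation functions on the $(x,y)$ side, mirroring the phenomenon already seen in the earlier examples.

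Finally I would compare the two rational functions. The inequality amounts to showing that the numerator of $\tilde F_2(b)-F_2(b)$, a polynomial in $b$ with rational coefficients, has no roots other than possibly $b\in\{0,\pm1\}$. I expect this last verification to be the main obstacle: computing two rational functions and observing that they \emph{look} different is routine, but establishing $\tilde F_2(b)\neq F_2(b)$ for \emph{all} $b\neq0,\pm1$ requires ruling out accidental coincidences at isolated values of $b$, i.e.\ controlling the exact root set of the difference. Concretely, I would factor this numerator and confirm that any complex zeros lie in the excluded set $\{0,\pm1\}$ (or that the difference is a manifestly nonvanishing expression, such as a nonzero constant times a power of $b$), which then immediately yields the failure of symplectic invariance for the whole family.
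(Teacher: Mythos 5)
Your overall strategy is the same as the paper's: run the recursion on both sides, obtain $F_2(b)$ and $\tilde F_2(b)$ as explicit rational functions of $b$, and inspect the difference. The paper finds $F_2(b)-\tilde F_2(b)=-\tfrac{1}{960\,b^2}$, i.e.\ exactly the ``nonzero constant times a power of $b$'' scenario you anticipated, so your worry about accidental coincidences at isolated values of $b$ evaporates. Your observation that both sides stay entirely within the original Eynard--Orantin recursion is also correct and is precisely what makes this a counterexample to the argument of \cite{Eynard:2007ii}.

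There is, however, one concrete error in your setup of the $(\tilde x,\tilde y)$ side: you place $t=\infty$ in the set $\triangle$ of ramification points of $\tilde x=(t-b)^2$. The Riemann--Hurwitz count is correct as a statement about the degree-two covering map, but the $\triangle$ entering the recursion and the free-energy formula consists of the zeros of $\mathrm{d}\tilde x$ only; $t=\infty$ is a double \emph{pole} of $\tilde x$, where $\mathrm{d}\tilde x=2(t-b)\,\mathrm{d}t$ has a pole of order three in the coordinate $s=1/t$, not a zero. The paper accordingly takes $\triangle=\{b\}$ here, just as it takes $\triangle=\{0\}$ for $\tilde x=\tfrac13 t^3$ in its first example. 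Your extra point happens to be harmless for the correlation functions, since the integrand $K_2\,\mathcal{W}^g_{2,n}$ decays like $q^{-6}\,\mathrm{d}q$ as $q\to\infty$ and those residues vanish; but it is not harmless for the free energy. Indeed $\tilde\Phi(q)=\int^q\tilde y\,\mathrm{d}\tilde x=\tfrac23 q^3-bq^2+2q-2b\log q+\mathrm{const}$ grows like $q^3$ (and has a logarithmic branch point at $\infty$, because $\tilde y\,\mathrm{d}\tilde x$ has residue $-2b$ at $t=0$), while $\tilde W^2_1$ only decays like $q^{-2}\,\mathrm{d}q$, so $\underset{q=\infty}{\mathrm{Res}}\,\tilde\Phi\,\tilde W^2_1$ is generically nonzero and in fact not even well defined. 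Including it would corrupt $\tilde F_2(b)$. Drop $t=\infty$ from $\triangle$, take residues only at $q=b$ (and at $q=\pm1$ on the $(x,y)$ side), and your computation reproduces the paper's $F_2(b)=\frac{45+255b^2-80b^4+40b^6-8b^8}{7680(b^2-1)^5}$ and $\tilde F_2(b)=\frac{-8+85b^2+175b^4}{7680\,b^2(b^2-1)^5}$, whose difference is $-1/(960\,b^2)\neq0$ for all admissible $b$.
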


\begin{proof}
This is a straightforward calculation. For the case $(S,x,y,W^0_2)$, $x$ has two ramification points at $a= \pm1$, both with deck transformation $\theta(t) = 1/t$. After computing the correlation functions, we obtain the free energy
\begin{equation}
F_2(b) = \frac{45 + 255 b^2 - 80 b^4 + 40 b^6 - 8 b^8}{7680 (b^2-1)^5}.
\end{equation}

Consider now the spectral curve $(S, \tilde{x}, \tilde{y}, W^0_2)$. $\tilde{x}$ has a single ramification point at $a=b$. The deck transformation is $\theta(t) = 2 b-t$. After computing the correlation functions, we obtain
\begin{equation}
\tilde{F}_2(b) = \frac{-8 + 85 b^2 + 175 b^4}{7680 b^2 (b^2-1)^5}.
\end{equation}
The difference is
\begin{equation}
F_2 (b)- \tilde{F}_2 (b)= - \frac{1}{960 b^2},
\end{equation}
which is clearly not zero.
\end{proof}

\begin{rem}
It is interesting to study what happens for $b=0$. $F_2(b)$ has a well defined limit:
\begin{equation}
\lim_{b =0} F_2(b) = - \frac{3}{512},
\end{equation}
which is also what one obtains starting from the curve $x=t+1/t$ and $y=t^2$. As for $\tilde{F}_2(b)$, at $b=0$ it blows up to infinity. However, if we do the calculation for $\tilde{F}_2$ starting with $\tilde{x} = t^2$ and $\tilde{y} = t + 1/t$ directly, then we obtain a finite answer, namely
\begin{equation}
\tilde{F}_2 = - \frac{3}{512} = F_2.
\end{equation}
Thus, as a function of $b$, one could say that $\tilde{F}_2(b)$ is discontinuous at $b=0$; it is precisely such that symplectic invariance is restored at $b=0$.
\end{rem}

In fact, there are many other such counter examples. It seems that for most cases with functions
\begin{equation}
x(t) = t+\frac{1}{t}, \qquad y(t) = P_n(t),
\end{equation}
where $P_n(t)$ is a generic degree $n$ polynomial such that $\mathrm{d} y(t)$ has simple zeroes, symplectic invariance does not hold. Notable exceptions however are when the polynomial $P_n(t)$ has a factor of $t^2$, that is $P_n(t) = t^2 Q_{n-2}(t)$ for some degree $n-2$ polynomial $Q_{n-2}(t)$, in which case it seems that $\tilde{F}_2 = F_2$, just as in the remark above.

We also obtain similar statements with the generalized recursion. While $\tilde{F}_2 = F_2$ holds for curves of the form
\begin{equation}
x(t) = t+\frac{1}{t}, \qquad y(t) = \frac{1}{n} t^n,
\end{equation}
as explained in the previous subsection, it does not seem to hold for curves of the form
\begin{equation}
x(t) = t+\frac{1}{t}, \qquad y(t) = P_n(t), 
\end{equation}
where $P_n(t)$ is a degree $n$ polynomial (giving up the requirement that $\mathrm{d} y(t)$ has only simple zeroes). For instance, as a particular example, if we choose
\begin{equation}
x(t)=t+\frac{1}{t}, \qquad y(t) = (t-2)^4,
\end{equation}
we obtain
\begin{equation}
F_2= -\frac{343}{22674816}
\end{equation}
using the standard Eynard-Orantin recursion, while we get
\begin{equation}
\tilde{F}_2 = -\frac{160471}{14511882240}
\end{equation}
using our generalized recursion.

However, notable exceptions again are when the polynomial $P_n(t)$ has a factor of $t^m$ for some $m \geq 2$, that is $P_n(t) = t^m Q_{n-m}(t)$ for some degree $n-m$ polynomial $Q_{n-m}(t)$. In these cases, symplectic invariance seems to hold.

An obvious question then is why symplectic invariance fails for these particular curves, and what assumptions need to be made on the functions $x$ and $y$ for symplectic invariance to hold. What goes wrong in the argument of \cite{Eynard:2007ii}? It is not clear to us what the precise answer is.

It is interesting to note however that the counter examples above share the property that the one-form $y(t) \mathrm{d} x(t)$ has a pole at $t=0$ with a non-zero residue, hence $\Phi = \int y(t) \mathrm{d} x(t)$, which enters into the calculation of the $F_g$, is not meromorphic over $S$ (it has log singularities). Perhaps this causes a problem in the argument of \cite{Eynard:2007ii} (indeed, in the cases above where symplectic invariance does hold, the pole of $\mathrm{d} x(t)$ at $t=0$ is cancelled by a zero of $y(t)$ so that $y(t) \mathrm{d} x(t)$ only has a pole at infinity with no residue.) For instance, the Riemann bilinear identity, presented in the form:
\begin{equation}
\underset{q \to \alpha}{\text{Res}} \Phi(q) W^g_1(q) = \frac{1}{2 \pi i} \sum_{I=1}^g \left( \oint_{B_I} y \mathrm{d} x \oint_{A^I} W^g_1 - \oint_{A^I} y \mathrm{d} x \oint_{B_I} W^g_1 \right),
\end{equation}
where $\Phi(q)= \int_0^q y(t) \mathrm{d} x(t)$, $\alpha$ is the set of all poles of $y \mathrm{d} x$ and $W^g_1$, and $(A^I,B_I)$ is a canonical basis of cycles on $S$,
is only valid when the one-form $y \mathrm{d} x$ is residueless (\emph{i.e.} of the second kind), so that $\Phi(q)$ is meromorphic on $S$. Perhaps this is where the argument fails, since this identity is used in \cite{Eynard:2007ii}.

Note that the examples that we chose in the previous subsection to test symplectic invariance for the generalized recursion did not have this type of pathological behavior, hence for those cases symplectic invariance was expected to hold.

In any case, it would be very interesting to understand what assumptions need to be made on the spectral curves for the argument of \cite{Eynard:2007ii} to hold.  It would perhaps be even more interesting to obtain a direct and independent proof of symplectic invariance, more intrinsic to the topological recursion itself, clarifying the required assumptions along the way. Such a proof could then be extended to the generalized recursion proposed in this paper. Perhaps the definition of the free energies could also be slightly modified to make them symplectic invariant for all curves, including the pathological cases presented above. We are currently working on this and hope to report about it in the near future.

\section{Conclusion}
\label{s:conclusion}

In this paper we proposed a generalization of the Eynard-Orantin topological recursion for meromorphic functions $x$ and $y$ with arbitrary ramification. Our generalization is based on the work of Prats Ferrer \cite{Prats:2010}, where double ramification points were considered in the context of matrix models. We argued for the validity of our generalization by showing that in various examples it is consistent with invariance of the free energies $F_g$ under the highly non-trivial transformation $(x,y) \mapsto (y,x)$, and that the correlation functions satisfy important properties fulfilled by the corresponding objects in the Eynard-Orantin recursion. Along the way, we realized that symplectic invariance is in fact more subtle than expected, and much remains to be clarified.

A number of future avenues of research immediately come to mind. On the one hand, it would be interesting to obtain a matrix model derivation of the generalized recursion, in the spirit of \cite{Prats:2010}. Namely, one would start with a matrix model whose spectral curve has arbitrary ramification, and show that the generalized recursion is the solution of the loop equations. It should be possible to generalize the techniques of \cite{Prats:2010} in this context.
On the other hand, as we already discussed in subsections \ref{s:other} and \ref{s:SI}, it would be very interesting to study symplectic invariance of the generalized recursion further, and also prove various properties such as symmetry of the correlation functions, the dilaton equation, and the explicit formula for $W^0_3$. 

On a different note, the generalized recursion allows us to deal with functions $x$ and $y$ with arbitrary ramification. However, there is still a case that we cannot deal with: when $x$ has \emph{no} ramification point. \emph{A priori}, one would be tempted to say that in this case, all correlation functions and free energies are trivially zero, since at each step of the recursion there is no residue to be taken. But this does not seem to be the right approach in the context of symplectic invariance; one should rather say that the recursion is not well defined when $x$ has no ramification point. As an example, it is well known (see \cite{Norbury:2009,NS:2010}) that for the genus $0$ spectral curve with meromorphic functions $x = t + 1/t$ and $y=t$, the $F_g$'s are given by
\begin{equation}
F_g = \chi(\mathcal{M}_g) = \frac{B_{2g}}{2g (2g-2)},
\end{equation}
where $\chi(\mathcal{M}_g)$ is the orbifold Euler characteristic of the moduli space of genus $g$ curves, and $B_{2g}$ is the $2g$'th Bernoulli number. If symplectic invariance is to be believed, the free energies $\tilde{F}_g$ obtained from the spectral curve $(\tilde{x},\tilde{y})=(y,x)$ should be equal to the $F_g$ above, which are clearly non-zero, even though $\tilde{x} = y = t$ is a one-to-one map. Therefore, it is clear that to make sense of symplectic invariance, one needs to extend the recursion to cases where the $x$-map has no ramification. At the moment it is unclear to the authors how this may be done, or in fact if this can be done at all. It would be very interesting to investigate this further.

\end{document}